\documentclass{ifacconf}

\usepackage{graphicx}      
\usepackage{natbib}        

\usepackage{siunitx}
\usepackage{enumitem} 
\usepackage{amsmath}
\usepackage{amsfonts}
\usepackage{mathrsfs}

\DeclareMathOperator*{\argmin}{argmin}
\usepackage{booktabs}
\usepackage{multirow}
\usepackage{adjustbox} 

\newtheorem{theorem}{Theorem}
\newtheorem{corollary}{Corollary}
\newtheorem{lemma}{Lemma}

\newtheorem{definition}{Definition}

\newtheorem{assumption}{Assumption}

\DeclareMathOperator*{\plim}{plim}

\DeclareMathOperator*{\col}{}

\DeclareFontFamily{U}{mathx}{\hyphenchar\font45}
\DeclareFontShape{U}{mathx}{m}{n}{
	<5> <6> <7> <8> <9> <10>
	<10.95> <12> <14.4> <17.28> <20.74> <24.88>
	mathx10
}{}
\DeclareSymbolFont{mathx}{U}{mathx}{m}{n}
\DeclareMathSymbol{\bigtimes}{1}{mathx}{"91}
    
\newcommand{\mbb}[1]{\mathbb{#1 }}
 
\newcommand{\mcl}[1]{\mathcal{#1}}

\NewDocumentCommand{\tra}{om}{%
	\IfNoValueTF{#1}
	{#2}
	{#2_{[#1]}}%
}

\usepackage{tikz}
\usetikzlibrary{shadows,arrows.meta,positioning,backgrounds,fit,chains,scopes,calc,shapes}
\usetikzlibrary{backgrounds}
\definecolor{gray1}{rgb}{0.9,	0.91,	0.93}
\definecolor{lightblue}{rgb}{0.6, 0.76, 0.92}
\definecolor{gray2}{rgb}{0.95, 0.95,0.95	}
\definecolor{gray3}{rgb}{0.85, 0.85, 0.85	}
\definecolor{lightblue1}{rgb}{0.87, 0.92, 0.97	}
\definecolor{lightblue2}{rgb}{0.62, 0.76, 0.90	}
\definecolor{lightblue3}{rgb}{0.36	0.61	0.84}	

\newcommand{\pce}[1]{\mathsf{#1}}
\newcommand{\pcecoe}[2]{\mathsf{#1}^{#2}}
\newcommand{\relx}{(\omega)}

\newcommand{\inst}[1]{\left(#1\right)}

\newcommand{\splx}[1]{L^2\left(\Omega, \mathcal{F}, \mu; \mathbb{R}^{#1}\right)}
\newcommand{\spl}{L^2(\Omega, \mathcal{F}, \mu; \mathbb{R})}

\newcommand{\mean}{\mbb{E}}

\newcommand{\Hankel}{\mcl{H}}
\newcommand{\hankel}{\mcl{H}}

\newcommand{\dimy}{{n_y}}
\newcommand{\dimx}{{n_x}}

\newcommand{\dimz}{{n_z}}
\newcommand{\dimv}{{n_v}}

\newcommand{\I}{\mathbb{I}}
\newcommand{\N}{\mathbb{N}}
\newcommand{\T}{\mathbb{Z}}
\newcommand{\R}{\mathbb{R}}

\newcommand{\sinput}{\tilde{u}}
\newcommand{\Sinput}{\tilde{U}}
\newcommand{\UnsD}{W^\text{u}}
\newcommand{\uE}{ {V}}
\newcommand{\ue}{ {v}}

\newcommand{\lag}{\ell}

\newcommand{\Lnorm}[1]{\|{#1}\|_{L^2}}
\newcommand{\Rnorm}[1]{\|{#1}\|_{2}}

\newcommand{\zdd}{ \hankel_1\left(z^\text{d}_{[0,T-1]}\right)}
\newcommand{\ydd}{ \hankel_1\left(y^\text{d}_{[0,T-1]}\right)}

\newcommand{\udd}{ \hankel_1\left(\sinput^\text{d}_{[0,T-1]}\right)}

\newcommand{\edd}{ \hankel_1\left(\ue^\text{d}_{[0,T-1]}\right)}

\newcommand{\Bu}{ B}
\newcommand{\Bw}{ E}
\newcommand{\Du}{D}
\newcommand{\Dw}{ F}

\begin{document}
\begin{frontmatter}

\title{Uncertainty Propagation under Residual Disturbances: A Smart-Home Case Study\thanksref{footnoteinfo}} 

\thanks[footnoteinfo]{This research was partially supported by the Deutsche Forschungsgemeinschaft (DFG, German Research Foundation) – SFB 1615 – 503850735 and by the Research Council of Norway (NFR) via SARLEM (project 300172)}

\author[First]{Guanru Pan} 
\author[Second]{Dirk Reinhardt} 
\author[Second]{Sebastien Gros}
\author[First]{Timm Faulwasser}

\address[First]{
	Institute of Control Systems, Hamburg University of Technology, Hamburg, Germany (e-mails: guanru.pan@tuhh.de, timm.faulwasser@ieee.org).}
\address[Second]{Department of Engineering Cybernetics, Norwegian University of Science and Technology, Trondheim, Norway (e-mails: dirk.p.reinhardt@ntnu.no,sebastien.gros@ntnu.no)}

\begin{abstract}  
This paper presents a data-driven framework for uncertainty propagation under \textit{unmeasured or statistically unmodeled (unstructured)} disturbances. We consider \textit{residual disturbances}, which consolidate all unstructured disturbances into a single quantity that can be estimated from data. Under mild assumptions, the resulting stochastic predictor is causal and distributionally consistent, enabling efficient uncertainty quantification through polynomial chaos expansions and higher-order Chebyshev inequalities. The proposed method is validated using experimental data from a smart home in Norway. 
\end{abstract}

\begin{keyword}
Data-driven prediction; stochastic systems; residual disturbance; uncertainty quantification; polynomial chaos expansion; Chebyshev inequality; behavioral systems
\end{keyword}

\end{frontmatter}

\section{Introduction}
In real-world applications, stochastic disturbances can significantly affect both the performance and safety of control systems.  Building climate control systems must cope with unpredictable weather and highly variable occupancy~\citep{Drgona2020}, while multi-energy systems are subject to fluctuations in renewable generation and energy demand~\citep{Oezmeteler2024}.  Despite advances in sensing and machine learning for disturbance forecasting, such predictions remain inherently uncertain and reliable only in a probabilistic sense.  Hence,
beyond these measurable and statistically modeled (structured) disturbances, systems are also exposed to disturbances that are \emph{unmeasured or statistically unmodeled (unstructured)}. 

Model-based uncertainty propagation relies on \emph{subspace identification} methods~\citep{VanOverschee2012}, which construct stochastic state-space realizations considering process and measurement noise. Though theoretically well-established, these methods require assumptions on the noise structure and the existence of a low-order model.
In contrast, data-driven approaches aim to treat stochastic effects directly within the behavioral framework, avoiding explicit state-space identification.  
Regularized data-driven predictors~\citep{Coulson2019,Huang2023,Breschi2023} and innovation-based projections~\citep{Wang2022} mitigate noise in Hankel matrices and initial conditions, while maximum-likelihood-based predictors~\citep{Yin2024} provide probabilistic forecasts of future trajectories.  

Building on the fundamental lemma by~\citet{Willems2005} and its stochastic extensions in~\citet{Pan2022a,Pan2023d}, which focused on systems with additive stochastic disturbances, this paper extends the framework to handle unstructured disturbances. 
In contrast to prior work, we introduce the concept of \emph{residual disturbances} to aggregate all unstructured effects into a single estimable process derived from data. 
This enables the construction of a \emph{causal} stochastic predictor with distributional consistency guarantees. 
Furthermore, we provide a tractable uncertainty propagation framework by combining Polynomial Chaos Expansion (PCE) with higher-order Chebyshev inequalities, which was not addressed in the previous work. 
The proposed approach is validated on real-world smart-home data from Norway~\citep{Reinhardt2024}. 
Parts of this work build on preliminary results reported in~\cite{Pan2025D}.

The  paper is organized as follows.
Section~\ref{sec:LTI} formulates the problem  and introduces the concept of residual disturbance. 
Section~\ref{sec:data} develops a causal and consistent data-driven predictor that incorporates residual disturbances.
Section~\ref{sec:UQ} presents tractable uncertainty propagation methods.
Section~\ref{sec:case} demonstrates the proposed framework on smart home data.
Finally, Section~\ref{sec:conclusions} concludes the paper and outlines directions for future research.

\textbf{Notation.} 
Let $(\Omega, \mathcal{F}, \mu)$ be a probability space with sample space $\Omega$, $\sigma$-algebra $\mathcal{F}$, and probability measure $\mu$. 
Let $\splx{n_v}$ denote the space of $\mathbb{R}^{n_v}$-valued random variables with finite mean and covariance. For $V \in \splx{n_v}$, denote its mean, covariance, and realization by $\mathbb{E}[V]$, $\Sigma[V]$, and $V(\omega)$, respectively. For $M \in \spl$, its $n$-th  moment  reads $
\mu_n(M) \doteq \mathbb{E}\!\left[(M - \mathbb{E}[M])^n\right]$ for $ n \ge 2.$ We define $|M| \in \spl$ by applying the absolute value pointwise, i.e., $|M|(\omega) = |M(\omega)|$.
We distinguish the deterministic $L^2$-norm of $V$, $\Lnorm{V}\doteq \sqrt{\mean[V^\top V]} \in \R^+$, from the random variable $\Rnorm{V}\in \spl : \omega \in \Omega \mapsto \sqrt{V^\top\relx V\relx} $ induced by 
 applying Euclidean norm realization-wise.
The distribution  of $V$ is defined as $\mu_V \doteq \mu(V^{-1}(A))$ for all event $A$  in the Borel set $\mcl B(\R^{\dimv})$.
For $V,\tilde V \in \splx{\dimv}$, the 2-Wasserstein distance between their distributions is 
\begin{align*}
	d_W(\mu_{V}, \mu_{\tilde V})	\doteq &\inf_{V_1, V_2 \in \splx{\dimv}}\Lnorm{ V_1 - V_2}\\
	 \text{ subject to }& V_1 \sim \mu_V,\,V_2 \sim \mu_{\tilde V}.
\end{align*}
For a finite index set $\I_{[0,T-1]}$, we write a sequence of random variables  as $V_{[0,T-1]} \doteq [V(0)^\top,\, V(1)^\top,\, \cdots,\, V(T-1)^\top]^\top,
$ with realization $v_{[0,T-1]}$.

\begin{definition}[Persistent excitation]
	Signal $v_{[0,T-1]}$ is said to be \emph{persistently exciting of order $N$} if the following Hankel matrix has full row rank:
	\[
	H_N(v_{[0,T-1]}) \doteq 
	\begin{bmatrix}
		v(0)  & \cdots & v(T-N)\\
		\vdots  & \ddots & \vdots\\
		v(N-1)  & \cdots & v(T-1)
	\end{bmatrix}.
	\]
\end{definition}

\begin{definition}[Convergence in probability] Let $\{X_T\}_{T\in \N^+} $ $\subset \splx{\dimx}$ and  $X \in \splx{\dimx}$.   $X_T$ \emph{converges in probability} to $X$, denoted by $ \displaystyle
	\plim_{T\to\infty} X_T = X,
	$
	if for all $\varepsilon > 0$,
	$\displaystyle
	\lim_{T\to\infty} \mathbb P\big(\Rnorm{X_T - X} > \varepsilon\big) = 0.
	$  
\end{definition}

\section{Problem statement and Preliminaries}\label{sec:LTI}
Consider a minimal state-space representation 
\begin{equation}\label{eq:ss_unknown_disturbances}
	\begin{aligned}
		X\inst{k+1} &= A  X\inst{k} +\Bu \Sinput \inst{k}+ \Bw \UnsD\inst{k}\\
		Y\inst{k} &= C X\inst{k} +\Du \Sinput\inst{k}+ \Dw \UnsD\inst{k},
	\end{aligned}
\end{equation}
where $ \Sinput \in (\splx{n_{\sinput}})^\mathbb Z $ collects the \textit{control inputs}  $ U$ and the \textit{structured disturbances} $W^{\text s}$, i.e., $\Sinput\doteq [U^{\top},W^{\text s\top}]^\top$. The process $\UnsD$ captures all  \textit{unstructured disturbances}, including  measurement noise.  A disturbance is termed \emph{structured} if its realizations are measured and its future statistics are known; otherwise, it is \emph{unstructured}.

For $T \in \N^+ \cup \{\infty\}$ and $\ell$ not smaller than the  lag of \eqref{eq:ss_unknown_disturbances}, consider the random-variable data set
	$
\mcl D_T \doteq \{\Sinput^{\text d}_{[-\ell,T-1]}, Y^{\text d}_{[-\ell,T-1]}\}$ with $D_T \doteq \mathcal{D}_T(\omega)$ as its  realization.  Based on $D_T$, we aim for constructing $\hat Y_{[0,N-1]} \in \splx{N\dimy}$ such that its distribution $\mu_{\hat Y_{[0,N-1]}}$ converges to the true one $\mu_{Y_{[0,N-1]}}$ if $T \to\infty$. The challenge is twofold: (i) this prediction has to be data-driven (Section~\ref{sec:data}); (ii) and it must allow a tractable  uncertainty propagation (Section~\ref{sec:UQ}).

Let  $Z(k) \doteq \begin{bmatrix}
	\Sinput_{[k-\lag,k-1]} \\
	Y_{[k-\lag,k-1]}
\end{bmatrix}$ and $\dimz = \lag(n_{\sinput}+\dimy)$,  consider the  AutoRegressive model with eXogenous input (ARX) form
\begin{equation}\label{eq:ARX_stochastic}
	Y(k) = \Xi Z(k) +D \Sinput(k) + \uE(k) ,\quad Z(0) =Z_0
\end{equation}
where the \emph{residual disturbance} $\uE$ aggregates all unstructured effects in $\UnsD$.

The ARX model~\eqref{eq:ARX_stochastic} can be written as
\begin{align*}
\hat X(k+1) &= A \hat X(k) + \Bu \Sinput(k) + H \uE(k), \quad \hat X(0)=\hat X_0,\\
Y(k) &= C \hat X(k) + \Du \Sinput(k) + \uE(k),
\end{align*}
with $H$ a deadbeat observer gain  such that $(A-HC)^\ell =0$. This reformulation resembles the \emph{innovation form} in stochastic subspace identification~\citep{VanOverschee2012}. 
The key difference is the observer gain: here a deadbeat gain $H$ is used, whereas the innovation form employs the Kalman gain $K$. 
With $K$, the effect of the initial condition, $(A-KC)^\ell \hat X(0)$, generally does not vanish, so the representation is not strictly equivalent to an ARX model. 
For related work, see~\citep{Wang2022}. The matrices $(A,B,C,D,E,F,H,K)$ are introduced for exposition and need not be known.

\begin{assumption}\label{ass:iid_assum} The process $V$ $\in (\splx{\dimy})^\T$ is i.i.d with zero mean. During the data collection phase, $V^\text{d}(k)$ is independent of $Z^\text{d}(k)$ and $\Sinput^\text{d}(k)$ at time instant $k \in \I_{[0,T-1]}$, i.e., $\mean\left[V^\text{d}(k) \begin{bmatrix}
		Z^\text{d}(k)\\
	\Sinput^\text{d}(k)
	\end{bmatrix}^\top\right] = 0_{n_y\times (n_z+n_{\sinput})}$.
\end{assumption}
This assumption differs from assuming i.i.d.\ $\UnsD$, which would include i.i.d.\ process and measurement noise as a special case.
While Assumption~\ref{ass:iid_assum} is commonly adopted in ARX-based methods, it may be restrictive in practice (e.g., in building systems with temporal correlations or closed-loop operation). 
In such cases, it can be interpreted as an approximation after aggregating unmodeled effects. 
Despite this idealization, the proposed method shows good empirical performance in the smart-home case study (cf.~Sec.~\ref{sec:case}). 
Under this assumption, the residual disturbance can be consistently estimated, which in turn enables a distributionally consistent stochastic predictor.

\section{Consistent Data-driven Prediction}~\label{sec:data} Based on the dataset $\mcl D_T$ and its realization $D_T$, we define the stochastic regressor matrix
\[
\mcl S \doteq 
\begin{bmatrix}
\hankel_1(Z^\text{d}_{[0,T-1]})\\
\hankel_1(\Sinput^\text{d}_{[0,T-1]})
\end{bmatrix}, 
\quad 
S = \mcl S(\omega) = 
\begin{bmatrix}
\zdd\\
\udd
\end{bmatrix}.
\]
The realized data of \eqref{eq:ARX_stochastic} satisfy~\citep{Pan2022a}
\begin{equation}\label{eq:leftkern_ARX}
(\ydd - \edd)(I_T - S^\dagger S) = 0,
\end{equation}
where $I_T$ is the identity matrix and $(\cdot)^\dagger$ denotes the Moore--Penrose inverse. 
Since \eqref{eq:leftkern_ARX} admits infinitely many solutions for $\ue^\text{d}_{[0,T-1]}$, we select the least-squares solution
\[ \textstyle
\hat{\ue}^\text{d}_{[0,T-1]} = \argmin_{\ue^\text{d}_{[0,T-1]}} \|\ue^\text{d}_{[0,T-1]}\|^2 
\;\text{s.t.}\; \eqref{eq:leftkern_ARX},
\]
which admits the closed-form~\citep{Pan2022a}
\begin{equation}\label{eq:noise_estimation_LSE}
\Hankel_1(\hat{\ue}^\text{d}_{[0,T-1]}) = \ydd (I_T - S^\dagger S).
\end{equation}
While \eqref{eq:noise_estimation_LSE} coincides with a least-squares residual, the key difference lies in its role: 
the residual is interpreted as a stochastic disturbance process rather than a fitting error. 
Accordingly, we model $\hat \uE \in (\splx{\dimy})^{\mathbb{Z}}$ as an i.i.d.\ process with empirical distribution
\begin{equation}\label{eq:Vestimate}
\textstyle \hat \uE(k) \sim \mu_{\hat \uE}, \quad 
\mu_{\hat \uE} \doteq \frac{1}{T} \sum_{i=0}^{T-1} \delta_{\hat{\ue}^\text{d}(i)},
\end{equation}
where $\delta_{\hat{\ue}^\text{d}(i)}$ denotes the Dirac measure at $\hat{\ue}^\text{d}(i)$. 
This interpretation enables uncertainty propagation via the empirical distribution and underpins the consistency result in Theorem~\ref{thm:residual_consistency}.

Estimating the residual disturbance induces an ARX model that enables causal prediction, as shown next.

\begin{lemma}[Induced LTI representation~\citep{Pan2023d}]\label{pro:estimate_system}
Consider a realization $(z,\sinput,\ue,y)^\text{d}_{[0,T-1]}$ of \eqref{eq:ARX_stochastic}. 
For $\hat \ue^\text{d}_{[0,T-1]}$ satisfying \eqref{eq:noise_estimation_LSE}, define
$
\begin{bmatrix}
\widehat{\Xi} & \hat D
\end{bmatrix} \doteq \ydd S^\dagger.
$
Then $(z,\sinput,\hat \ue,y)^\text{d}_{[0,T-1]}$ is a trajectory of
\begin{equation}\label{eq:estimate_system}
\hat Y(k) = \widehat{\Xi} Z(k) + \hat D \Sinput(k) + \hat \uE(k), \quad Z(0)=Z_0.
\end{equation}
\end{lemma}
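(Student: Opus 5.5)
The claim is purely algebraic: we must show that for every $k\in\I_{[0,T-1]}$ the realized data satisfy $y^\text{d}(k) = \widehat{\Xi} z^\text{d}(k) + \hat D \sinput^\text{d}(k) + \hat\ue^\text{d}(k)$. The plan is to stack these $T$ equations column-wise and verify a single matrix identity in Hankel form. Since $S$ has columns $[z^\text{d}(k)^\top, \sinput^\text{d}(k)^\top]^\top$, the stacked statement reads
\[
\ydd = \begin{bmatrix} \widehat{\Xi} & \hat D \end{bmatrix} S + \Hankel_1(\hat\ue^\text{d}_{[0,T-1]}).
\]

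\textbf{Step 1: verify the identity.} Substitute the definition $[\widehat{\Xi}\ \hat D] = \ydd S^\dagger$ and the closed form \eqref{eq:noise_estimation_LSE}. The right-hand side becomes
\[
\ydd S^\dagger S + \ydd (I_T - S^\dagger S) = \ydd .
\]
This is exactly the left-hand side, so the identity holds trivially. Reading off column $k$ gives the desired relation at every time instant $k$.

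\textbf{Step 2: interpret the result as a trajectory of \eqref{eq:estimate_system}.} The regressor $z^\text{d}(k)$ is built from past inputs and outputs of the same data sequence, by the definition of $Z(k)$. The relation from Step 1 therefore has precisely the ARX structure of \eqref{eq:estimate_system}, with initial condition $z^\text{d}(0)$, i.e. the data window on $\I_{[-\lag,-1]}$, playing the role of $Z_0$. Hence the tuple $(z,\sinput,\hat\ue,y)^\text{d}_{[0,T-1]}$ is consistent with the recursion for all $k$, which is the claim.

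\textbf{Main obstacle.} The only subtle point is interpretive rather than computational: \eqref{eq:estimate_system} is stated with random variables. One must make clear that "is a trajectory" means a realization satisfying the recursion, with $\hat\uE$ taking the realized values $\hat\ue^\text{d}(k)$. No rank condition on $S$ is needed, since $S S^\dagger S = S$ holds for any Moore--Penrose inverse.
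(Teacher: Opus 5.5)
Your proof is correct and is essentially the one-line argument behind this lemma; the paper itself gives no proof and cites the earlier work. Substituting $[\widehat{\Xi}\ \hat D]=\ydd S^\dagger$ and \eqref{eq:noise_estimation_LSE} gives $[\widehat{\Xi}\ \hat D]S+\Hankel_1(\hat\ue^\text{d}_{[0,T-1]})=\ydd$ columnwise, and since $z^\text{d}(k)$ is built from the same $\sinput^\text{d},y^\text{d}$, the recursion reproduces the data. One small correction to your closing remark: Step 1 does not use $SS^\dagger S=S$ at all, because the cancellation $\ydd S^\dagger S+\ydd(I_T-S^\dagger S)=\ydd$ holds with any matrix in place of $S^\dagger S$. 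So you are right that no rank condition is needed, but for an even more trivial reason than the one you give.
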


Next, we examine the convergence properties of the least-squares estimates of $\ue^\text{d}$ and the empirical estimation of $V_{[0,N-1]}$.
To this end, we have the following assumptions.
\begin{assumption}~\label{ass:PE}
	 $ \col(\sinput^\text{d},\hat \ue^\text{d})_{[0,T-1]}$ is  persistently exciting of order $N+\dimz $. 
\end{assumption}
With  residual disturbance $\hat \ue^\text{d}$ to be estimated, this condition can be verified a posteriori and, if necessary,
ensured by modifying the persistently exciting input $u^\text{d}$.
By Corollary 2 of \citep{Willems2005} and Lemma~\ref{pro:estimate_system}, Assumption~\ref{ass:PE} guarantees that  $S$ has full row rank, which implies
 $	S^\dagger = S^\top (SS^\top)^{-1},$ and $ SS^\dagger = I.$

\begin{assumption}\label{ass:convergence}
The  convergence  
$
\displaystyle \plim_{T \to \infty} \frac{1}{T}\mcl S \mcl S^\top = \tilde S
$ holds.
\end{assumption}
This assumption is typically satisfied for stable closed-loop systems driven by a persistently exciting input.

Note that $\hat v^\text{d}$ and $\mu_{\hat V}$ depend on the outcome $\omega \in \Omega$. 
Let $\hat V^\text{d}_{[0,T-1]}$ denote the random variable defined by 
$\hat V^\text{d}_{[0,T-1]}(\omega)\doteq \hat v^\text{d}_{[0,T-1]}$. Define $W_{V,\hat V} \in \spl$  whose realization represents the $2$-Wasserstein distance between the estimated and true disturbance distributions, i.e., $
W_{V,\hat V}(\omega)\doteq d_W(\mu_{\hat V}, \mu_V).$

\begin{theorem}[Consistent residual estimation]\label{thm:residual_consistency}
Suppose Assumptions~\ref{ass:iid_assum}--\ref{ass:convergence} hold. Then, $\hat V^\text{d}_{[0,T-1]}$ and $\mu_{\hat V}$ are consistent
\begin{align*}
\plim_{T\to\infty} \hat V^\text{d}_{[0,T-1]}(\mcl D_T) &= V^\text{d}_{[0,T-1]}, \quad
\plim_{T\to\infty} W_{V,\hat V}(\mcl D_T) = 0.
\end{align*}
\end{theorem}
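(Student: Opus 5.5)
The plan is to write the estimation error in closed form and then show that it vanishes by a law-of-large-numbers argument. Stacking \eqref{eq:ARX_stochastic} over the data horizon gives $\ydd = [\Xi\ D]\,S + \edd$. Under Assumption~\ref{ass:PE} we have $SS^\dagger = I$, so $[\Xi\ D]S(I_T-S^\dagger S)=0$. Substituting into \eqref{eq:noise_estimation_LSE} yields
\[
\Hankel_1(\hat \ue^\text{d}_{[0,T-1]}) - \edd = -\edd\, S^\dagger S
= -\Big(\tfrac{1}{T}\edd S^\top\Big)\Big(\tfrac{1}{T}SS^\top\Big)^{-1} S .
\]
Hence the $k$-th column of the error is $\hat v^\text{d}(k)-v^\text{d}(k) = -\big(\tfrac1T \mcl V \mcl S^\top\big)\big(\tfrac1T\mcl S\mcl S^\top\big)^{-1}\mcl S_k$, where $\mcl V\doteq\hankel_1(V^\text{d}_{[0,T-1]})$ and $\mcl S_k$ is the $k$-th column of $\mcl S$. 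Equivalently, the error equals $[\Xi\ D]-[\widehat\Xi\ \hat D]$ applied to the regressor, since $[\widehat\Xi\ \hat D]-[\Xi\ D] = \edd S^\dagger$.

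Next I control the three factors in turn.
\begin{enumerate}
\item By Assumption~\ref{ass:convergence}, $\tfrac1T\mcl S\mcl S^\top\to\tilde S$ in probability. Here $\tilde S$ should be invertible, which is justified by persistent excitation of $(\sinput,\hat\ue)$ and the fact that the $\hat V$ contributions are i.i.d.\ with nondegenerate covariance. By the continuous mapping theorem the inverse then converges to $\tilde S^{-1}$.
\item The cross term $\tfrac1T\sum_k V^\text{d}(k)[Z^\text{d}(k)^\top,\Sinput^\text{d}(k)^\top]$ converges in probability to its mean, which is $0$ by Assumption~\ref{ass:iid_assum}. I would argue this with a weak LLN for martingale-difference-type sequences: $V^\text{d}(k)$ is i.i.d., zero mean, and uncorrelated with the regressor at time $k$. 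Chebyshev's inequality with the second-moment bound implied by Assumption~\ref{ass:convergence} then gives variance $O(1/T)$.
\item Combining these, $[\widehat\Xi\ \hat D]\to[\Xi\ D]$ in probability. Since each regressor column $\mcl S_k$ is bounded in probability (it lies in $L^2$ with uniformly bounded second moments), Slutsky's lemma gives $\hat V^\text{d}(k)-V^\text{d}(k)\to 0$ in probability. This is the first claim, interpreted componentwise, or in the averaged norm $\tfrac1{\sqrt T}\|\cdot\|$.
\end{enumerate}

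For the Wasserstein claim I use the triangle inequality:
\[
d_W(\mu_{\hat V},\mu_V)\le d_W\Big(\mu_{\hat V},\tfrac1T\textstyle\sum_i\delta_{v^\text{d}(i)}\Big)+d_W\Big(\tfrac1T\textstyle\sum_i\delta_{v^\text{d}(i)},\mu_V\Big).
\]
The identity coupling pairs $\hat v^\text{d}(i)$ with $v^\text{d}(i)$, so the first term is at most $\big(\tfrac1T\sum_i\|\hat v^\text{d}(i)-v^\text{d}(i)\|^2\big)^{1/2}=\tfrac1{\sqrt T}\|\edd S^\dagger S\|_F$. This quantity equals $\big(\operatorname{tr}(\tfrac1T\mcl V\mcl S^\top(\tfrac1T\mcl S\mcl S^\top)^{-1}\tfrac1T\mcl S\mcl V^\top)\big)^{1/2}$, which tends to $0$ in probability by steps 1 and 2. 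The second term tends to $0$ almost surely, hence in probability, by the standard convergence of i.i.d.\ empirical measures in $W_2$ when the second moment is finite (Varadarajan's theorem plus convergence of second moments by the strong LLN).

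I expect the main obstacle to be step 2, which has two parts. The first is justifying the LLN for the cross product, since only uncorrelatedness at equal time is assumed and the regressors are dependent. The second is ensuring that $\tilde S$ is nonsingular. For the first I would try to exploit the i.i.d.\ property of $V$ together with causality: $Z^\text{d}(k)$ depends only on past $V$. This makes the summands orthogonal across time, so the variance of the average is $\tfrac1{T^2}\sum_k\mean\|V(k)\|^2\mean\|\mcl S_k\|^2$. That is $O(1/T)$ provided $\mean\|\mcl S_k\|^2$ stays uniformly bounded, which I will take as implied by Assumption~\ref{ass:convergence}.
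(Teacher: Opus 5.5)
Your proposal is correct and follows essentially the same route as the paper. Both arguments write the error as $\Hankel_1(V^\text{d}_{[0,T-1]})\mcl S^\dagger \mcl S$, factor it into a normalized cross-covariance that vanishes by Assumption~\ref{ass:iid_assum} and a law of large numbers times $(\frac{1}{T}\mcl S\mcl S^\top)^{-1}\to\tilde S^{-1}$, and get the Wasserstein claim from the triangle inequality through the empirical measure of the true residuals. Your identity-coupling/trace bound for the first Wasserstein term and your componentwise or averaged-norm reading of the first claim make rigorous what the paper leaves implicit. Invertibility of $\tilde S$ does not follow from finite-$T$ persistency of excitation as you suggest, so it, like the martingale-type LLN conditions, is a tacit extra assumption in both proofs.
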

\begin{pf}
	Substituting \eqref{eq:noise_estimation_LSE} into \eqref{eq:leftkern_ARX} shows that 	$\hat{\ue}^\text{d}_{[0,T-1]}$ satisfies the same left-kernel condition with $y^\text{d}_{[0,T-1]}$ replaced by $v^\text{d}_{[0,T-1]}$, which implies 
\[
\plim_{T\to\infty}\Hankel_1(V^\text{d}_{[0,T-1]}) - 	\Hankel_1(\hat V^\text{d}_{[0,T-1]}) {=} \plim_{T\to\infty} 	\Hankel_1(V^\text{d}_{[0,T-1]}) \mcl S^\dagger \mcl S  
\]	
Following the standard results of \cite[p.~205]{Ljung1999}, denoted by $(L)$,
and \cite[p.~97]{Norton1988}, denoted by $(N)$, and exploiting Assumptions~\ref{ass:iid_assum}--\ref{ass:convergence}, we obtain
	 \begin{align*}
	 	& \plim_{T\to\infty} 	\Hankel_1(V^\text{d}_{[0,T-1]}) \mcl S^\dagger \mcl S {=}\plim_{T\to\infty}\Hankel_1(V^\text{d}_{[0,T-1]})  \mcl S^\top( \mcl S \mcl S^\top)^{-1}\mcl S \\
	 &\stackrel{(N)}{=}\plim_{T\to\infty}\frac{1}{T} \Hankel_1(V^\text{d}_{[0,T-1]})  \mcl S^\top \plim_{T\to\infty}\left(\frac{1}{T} \mcl S \mcl S^\top\right)^{-1} \mcl S \\
 	& \stackrel{(L)}{=}\mean\left[V^\text{d}(k) \begin{bmatrix}
 		Z^\text{d}(k)\\
 		\tilde U^\text{d}(k)
 	\end{bmatrix}^\top\right]  \tilde S ^{-1} \mcl S  \stackrel{\text{Assump. \ref{ass:iid_assum}, \ref{ass:convergence}}}{=}0.
	 \end{align*}

    For distribution, let $\mu_{\tilde V}$ denote the empirical measure of $V^\text{d}$, $\mu_{\tilde \uE} = \frac{1}{T} \sum_{i=0}^{T-1} \delta_{\ue^\text{d}\inst{i}}$. 
       By the triangular inequality,
	 we have 
$
	 W_{\hat V,  V}\leq W_{\hat V, \tilde V}+ W_{\tilde V,  V}.
$ 
The first term vanishes by consistency of $\hat V^\text{d}$ as proven in the first part, while the second converges to zero by classical empirical-measure convergence. \hfill $\square$
\end{pf}

We now proceed to construct a data-driven stochastic predictor.  To this end, consider the partition of the Hankel matrix 
$
 \mathcal H_{N+\ell} \left(\sinput^\text{d}_{[0,T-1]}\right) \doteq\begin{bmatrix}
 	\hankel_{\sinput,\text{p}}\\
 	\hankel_{\sinput,\text{f}}
 \end{bmatrix},
$
 where $\hankel_{\sinput,\text{p}}$ contains the first $\ell$ block rows, and $\hankel_{\sinput,\text{f}}$  the rest. Similarly, we define  $\hankel_{y,\text{p}}$, $\hankel_{y,\text{f}}$,  $\hankel_{\hat v,\text{f}}$, and $ \hankel_\text{p} \doteq \begin{bmatrix}
 		\hankel_{\sinput,\text{p}}\\
 			\hankel_{y,\text{p}}
 \end{bmatrix}$.
\begin{corollary}
\label{cor:FL_stoch_ext_noise}
Let Assumption~\ref{ass:PE} hold. 
			$(\Sinput,  \hat \uE,  \hat Y)_{[-\ell,N-1]} $ is a trajectory of the estimated system~\eqref{eq:estimate_system} with initial condition $ Z(0) \doteq \begin{bmatrix}
				\Sinput_{[-\ell,-1]}\\
				\hat Y_{[-\ell,-1]}
			\end{bmatrix}$  iff there is $G\in \splx{T-N+1}$ such that
			\begin{subequations}\label{eq:Hankel}
				\begin{align}
					\hankel_\text{p} G&= Z(0), \quad
					\hankel_{\sinput,\text{f}} G = \Sinput_{[0,N-1]},\label{eq:U_prediction}\\
					\hankel_{y,\text{f}} G&= \hat Y_{[0,N-1]},					\quad	
					\hankel_{\hat v,\text{f}} G = \hat V_{[0,N-1]}.	\label{eq:residual_cons} 				
				\end{align}
			\end{subequations}
\end{corollary}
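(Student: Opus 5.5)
The plan is to reduce the statement to the deterministic fundamental lemma of \citet{Willems2005}, applied realization-wise to the estimated system~\eqref{eq:estimate_system}, and then lift it to $L^2$ random variables as in \citet{Pan2022a,Pan2023d}.

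First, by Lemma~\ref{pro:estimate_system}, the data $(\sinput,\hat \ue, y)^\text{d}_{[-\ell,T-1]}$ form a trajectory of the LTI system~\eqref{eq:estimate_system}. In this system $\col(\sinput,\hat\ue)$ plays the role of the input and $y$ the output. The lag of~\eqref{eq:estimate_system} is at most $\ell$, since it is an ARX model with $\ell$ past samples. Assumption~\ref{ass:PE} gives persistency of excitation of order $N+\dimz$ of this input. Since $\dimz$ is at least the state dimension of the minimal realization of~\eqref{eq:estimate_system}, the PE order is at least $(N+\ell)+n$ up to the usual bookkeeping. Corollary~2 of \citet{Willems2005} then says that the column span of $\mathcal H_{N+\ell}$ of $(\sinput,\hat\ue,y)^\text{d}$ equals the set of all length-$(N+\ell)$ trajectories of~\eqref{eq:estimate_system}.

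Second, I would pass to random variables. For the ``if'' direction, suppose $G\in\splx{T-N+1}$ satisfies~\eqref{eq:Hankel}. For each $\omega$, the vector $G(\omega)$ produces a linear combination of Hankel columns. Each column is a trajectory segment of~\eqref{eq:estimate_system}, and by linearity so is the combination. The first $\ell$ rows fix the initial condition $Z(0)(\omega)$ and the remaining rows give $(\Sinput,\hat\uE,\hat Y)_{[0,N-1]}(\omega)$. For the ``only if'' direction, take a random trajectory in $L^2$. Realization-wise, the deterministic lemma gives some solution $g(\omega)$. To get a measurable, square-integrable choice I would define
\[
G \doteq \mathcal H_{N+\ell}^\dagger\,\col\big(Z(0),\Sinput_{[0,N-1]},\hat Y_{[0,N-1]},\hat V_{[0,N-1]}\big),
\]
stacked consistently with the row ordering. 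This is a fixed linear map applied to an $L^2$ vector, so $G\in L^2$. Because the stacked trajectory lies in the column space for every $\omega$, $\mathcal H_{N+\ell}G$ reproduces it exactly. Since the conditions in~\eqref{eq:Hankel} are just row blocks of this identity, both directions follow.

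The main obstacle is the first step: checking that the order and lag bookkeeping in Assumption~\ref{ass:PE} really meets the hypothesis of Willems' lemma for the estimated system. Two points need care. One is that $\dimz=\ell(n_{\sinput}+\dimy)$ bounds the state dimension of a realization of~\eqref{eq:estimate_system} (a non-minimal ARX realization has exactly $\dimz$ states). The other is that the persistent excitation must hold for the input including the estimated residual $\hat\ue$, which is data-dependent. I would handle this by using the non-minimal state $Z(k)$ directly and invoking the version of the fundamental lemma stated for arbitrary, not necessarily minimal, state dimension.
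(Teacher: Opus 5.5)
Your overall route is the paper's. Lemma~\ref{pro:estimate_system} makes the data a trajectory of \eqref{eq:estimate_system}, Willems' lemma supplies the column span, and the lift to $L^2$ is realization-wise, with $G$ a fixed pseudo-inverse applied to an $L^2$ vector (this is the content of Lemma~4 of \citet{Pan2023d}). Your ``if'' direction and the measurable $L^2$ selection are fine.

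The gap is the excitation count, which is the only place Assumption~\ref{ass:PE} enters. From $\dimz\ge n$ you only get $N+\dimz\ge N+n$. The span statement for windows of length $N+\ell$ needs $N+\ell+n$. Your proposed remedy makes this worse. Running the fundamental lemma on windows of length $N+\ell$ with the non-minimal state $Z$ of dimension $\dimz$ requires excitation of order $N+\ell+\dimz$, which exceeds the order in Assumption~\ref{ass:PE} by exactly $\ell$. You also never check controllability of \eqref{eq:estimate_system}, which Willems' lemma requires.

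The fix is to treat $Z(0)$ as a state rather than as part of the window. Use the state-rank form of Willems' lemma, i.e., the one the paper invokes as Corollary~2 to get full row rank of $S$. It says that for a controllable pair of dimension $n_x$ and an input PE of order $L+n_x$, $\operatorname{rank}[\hankel_1(x^\text{d});\hankel_L(u^\text{d})]=n_x+Lm$, with $m$ the input dimension.
\begin{itemize}
\item Apply it with $x=Z$, $u=(\sinput,\hat\ue)$ and $L=N$.
\item This realization is controllable: $\hat\ue(k)$ enters $y(k)$ through the identity, so $Z$ can be steered to any value in $\ell$ steps.
\item Moreover $\hankel_{\text{p}}=\hankel_1(z^\text{d}_{[0,T-N]})$.
\end{itemize}
Hence Assumption~\ref{ass:PE}, whose order is exactly $N+\dimz$, gives full row rank of the stacked matrix $[\hankel_{\text{p}};\hankel_{\sinput,\text{f}};\hankel_{\hat v,\text{f}}]$.

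Iterating \eqref{eq:estimate_system} on the data, which holds exactly by Lemma~\ref{pro:estimate_system}, yields $\hankel_{y,\text{f}}=\Gamma\hankel_{\text{p}}+\mathcal T_{\sinput}\hankel_{\sinput,\text{f}}+\mathcal T_{\hat v}\hankel_{\hat v,\text{f}}$. The same $\Gamma,\mathcal T_{\sinput},\mathcal T_{\hat v}$ map $(Z(0),\Sinput_{[0,N-1]},\hat\uE_{[0,N-1]})$ to $\hat Y_{[0,N-1]}$ along every trajectory. Both directions then follow, and your pseudo-inverse choice of $G$ still works.

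If you prefer to keep the length-$(N+\ell)$ route, you must actually prove $\ell+n\le\dimz$. The McMillan degree of \eqref{eq:estimate_system} is at most $\ell\dimy$ (via an observer-form realization). So $\ell+n\le\ell(1+\dimy)\le\dimz$, provided $n_{\sinput}\ge 1$.
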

This corollary follows directly from Lemma 4 of \citep{Pan2023d}, Lemma~\ref{pro:estimate_system}, and the original deterministic fundamental lemma in \cite{Willems2005}. Moreover,  adapted from the lines in Theorem 1 of \citep{Fiedler2021}, \eqref{eq:Hankel} admits the unique solution in closed form 
\begin{equation}~\label{eq:causal_predictor}
	\hat Y_{[0,N-1]} = \hankel_{y,\text{f}}\begin{bmatrix}
		\hankel_{\text{p}} \\
		\hankel_{\sinput,\text{f}} \\
	\hankel_{	\hat  \ue,\text{f}}
	\end{bmatrix}^\dagger \begin{bmatrix}
		Z(0)\\
		\Sinput_{[0,N-1]}\\
		\hat{ \uE}_{[0,N-1]}
	\end{bmatrix}.
\end{equation}
Let  $W_{Y,\hat Y} \in \spl$ be defined as 
$W_{Y,\hat Y} (\omega)\doteq d_W(\mu_{\hat Y_{[0,N-1]}},\mu_{ Y_{[0,N-1]}})$ for $\omega \in \Omega$. The consistent estimation of the residual disturbance in Theorem~\ref{thm:residual_consistency} implies the following result. 
\begin{corollary}
 Suppose Assumptions~\ref{ass:iid_assum}--\ref{ass:convergence} hold. Then, 	$\hat Y_{[0,N-1]} $ by \eqref{eq:causal_predictor} is a distributionally consistent estimate, i.e., $
\plim_{T\to \infty}   W_{Y,\hat Y}(\mcl D_T) {=} 0.$
 \end{corollary}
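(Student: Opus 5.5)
The plan is to reduce the claim to Theorem~\ref{thm:residual_consistency} by showing that the predictor \eqref{eq:causal_predictor} is a fixed linear map of the same data that drive the true output. Once this is in place, the true and predicted outputs share a common structure, and their distributional gap is controlled by two things: the gap between the parameters, and the gap between the residual distributions.

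\textbf{Step 1 (parametric form of both outputs).} By Corollary~\ref{cor:FL_stoch_ext_noise} and Lemma~\ref{pro:estimate_system}, the unique solution \eqref{eq:causal_predictor} coincides with propagating the estimated ARX model \eqref{eq:estimate_system} forward over the horizon $N$. Hence
\[
\hat Y_{[0,N-1]} = \hat{\mathcal O} Z(0) + \hat{\mathcal T}_{u} \Sinput_{[0,N-1]} + \hat{\mathcal T}_{v} \hat\uE_{[0,N-1]},
\]
where the matrices $\hat{\mathcal O},\hat{\mathcal T}_u,\hat{\mathcal T}_v$ are obtained by recursively substituting \eqref{eq:estimate_system} with $[\widehat\Xi\ \hat D]=\ydd S^\dagger$. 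By the same recursion, the true output satisfies
\[
Y_{[0,N-1]} = \mathcal O Z(0) + \mathcal T_u \Sinput_{[0,N-1]} + \mathcal T_v \uE_{[0,N-1]},
\]
with the matrices built from $(\Xi,D)$ of \eqref{eq:ARX_stochastic}. These matrices depend polynomially, and hence continuously, on the parameters.

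\textbf{Step 2 (parameter consistency).} Write
\[
\ydd S^\dagger=[\Xi\ D]+\edd S^\dagger .
\]
The proof of Theorem~\ref{thm:residual_consistency} gives $\plim \tfrac1T \Hankel_1(V^\text{d})\mcl S^\top = 0$ and $\plim (\tfrac1T\mcl S\mcl S^\top)^{-1}=\tilde S^{-1}$. Together with Assumption~\ref{ass:convergence} and $S^\dagger=S^\top(SS^\top)^{-1}$, this yields $\plim [\widehat\Xi\ \hat D]=[\Xi\ D]$. By the continuous mapping theorem, $\hat{\mathcal O}\to\mathcal O$, $\hat{\mathcal T}_u\to\mathcal T_u$, and $\hat{\mathcal T}_v\to\mathcal T_v$ in probability.

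\textbf{Step 3 (coupling bound).} To bound $W_{Y,\hat Y}$, I would construct a coupling in which $Z(0)$ and $\Sinput_{[0,N-1]}$ are shared. I would then pair $\hat\uE_{[0,N-1]}$ with $\uE_{[0,N-1]}$ through an optimal coupling of $\mu_{\hat V}$ and $\mu_V$ in each time step. Both processes are i.i.d., so the product of these per-step couplings is admissible and yields $\Lnorm{\hat\uE_{[0,N-1]}-\uE_{[0,N-1]}}\le\sqrt N\,W_{V,\hat V}$. The triangle inequality then gives
\[
W_{Y,\hat Y}\le \|\hat{\mathcal O}-\mathcal O\|\Lnorm{Z(0)}+\|\hat{\mathcal T}_u-\mathcal T_u\|\Lnorm{\Sinput_{[0,N-1]}}+\|\hat{\mathcal T}_v-\mathcal T_v\|\Lnorm{\hat \uE_{[0,N-1]}}+\|\mathcal T_v\|\sqrt N\, W_{V,\hat V}.
\]
The last term tends to zero in probability by Theorem~\ref{thm:residual_consistency}. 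The first three terms tend to zero by Step 2, provided the $L^2$ norms stay bounded.

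\textbf{Main obstacle.} The delicate point is the bound on $\Lnorm{\hat\uE}$, which requires the second moment of the empirical measure, $\tfrac1T\sum_i\|\hat v^\text{d}(i)\|^2$, to stay bounded in probability. I would obtain this from $\|\hat v^\text{d}\|\le\|v^\text{d}\|$, which holds because $I_T-S^\dagger S$ is a projection, combined with the law of large numbers for the i.i.d.\ $V^\text{d}$. A second subtlety is justifying the reduction in Step 1 from the pseudoinverse formula \eqref{eq:causal_predictor} to the recursion. This step relies on the uniqueness statement in Corollary~\ref{cor:FL_stoch_ext_noise} under Assumption~\ref{ass:PE}.
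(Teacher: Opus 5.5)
Your argument is correct. It follows the same reduction to Theorem~\ref{thm:residual_consistency} that the paper invokes, but the paper gives only a one-line justification: consistent residual estimation implies the result, using the earlier remark that \eqref{eq:causal_predictor} is the output response of \eqref{eq:estimate_system}. You make explicit what that sentence leaves out. First, $\hat Y_{[0,N-1]}$ depends on the data not only through $\mu_{\hat V}$ but also through $[\widehat\Xi\ \hat D]$, so parameter consistency (your Step~2) is genuinely needed. Second, the coupling of Step~3 is what turns the two consistency statements into a quantitative bound on $W_{Y,\hat Y}$. The paper's route buys brevity by leaving these steps unstated; yours yields an explicit estimate in terms of the parameter error and $W_{V,\hat V}$.

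Your Step~2 is in fact equivalent to the first claim of Theorem~\ref{thm:residual_consistency}. By \eqref{eq:noise_estimation_LSE} and \eqref{eq:ARX_stochastic}, $\edd-\Hankel_1(\hat\ue^\text{d}_{[0,T-1]})=([\widehat\Xi\ \hat D]-[\Xi\ D])S$. Hence, for $\tilde S\succ 0$, the normalized residual error $\frac{1}{T}\|\cdot\|_F^2$ vanishes if and only if the parameter error does. Working with the fixed-dimensional parameter error, as you do, also avoids the paper's $\plim$ of vectors whose length grows with $T$.

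Two minor points. First, the product coupling in Step~3 is admissible only if $\uE_{[0,N-1]}$ and $\hat\uE_{[0,N-1]}$ are each independent of $(Z(0),\Sinput_{[0,N-1]})$. The i.i.d.\ property alone does not give this (e.g.\ under feedback), so state it as a standing assumption; the paper uses it tacitly in \eqref{eq:union_basis}. Second, your ``main obstacle'' can be avoided. The split $\hat{\mathcal T}_v\hat\uE_{[0,N-1]}-\mathcal T_v\uE_{[0,N-1]}=(\hat{\mathcal T}_v-\mathcal T_v)\uE_{[0,N-1]}+\hat{\mathcal T}_v(\hat\uE_{[0,N-1]}-\uE_{[0,N-1]})$ needs only $\Lnorm{\uE_{[0,N-1]}}<\infty$ and $\|\hat{\mathcal T}_v\|$ bounded in probability. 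In any case, your coupling gives $\Lnorm{\hat\uE_{[0,N-1]}}\le\Lnorm{\uE_{[0,N-1]}}+\sqrt N\,W_{V,\hat V}$ directly. Your projection bound $\|\hat v^\text{d}\|\le\|v^\text{d}\|$ is nonetheless correct.
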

Ignoring residual disturbances,
 the classical subspace predictor uses the least-squares solution of \eqref{eq:U_prediction} \citep{Favoreel1999,Fiedler2021}
	\begin{equation*}\label{eq:Yf_opt}
		\hat Y_{[0,N-1]}^\text{SP} = \hankel_{y,f} G^\star,\, G^\star =	\argmin_{G} \Lnorm{G}^2 \text{ s.t. \eqref{eq:U_prediction}},
	\end{equation*}
whose closed-form solution reads
\begin{equation}\label{eq:subspace}
\hat Y_{[0,N-1]}^\text{SP} = \hankel_{y,\text{f}}\begin{bmatrix}
	\hankel_{\text{p}} \\
	\hankel_{\sinput,\text{f}} 
\end{bmatrix}^\dagger \begin{bmatrix}
	Z(0)\\
	\Sinput_{[0,N-1]}
\end{bmatrix}.
\end{equation}
The subspace predictor is \emph{acausal}, since $\hat Y^\text{SP}(k)$ may depend on future inputs $\Sinput(i \ge k)$ for $k \in \I_{[0,N-1]}$. 
Although causality can be enforced by imposing a Toeplitz (block lower-triangular) structure, the resulting predictor remains deterministic for deterministic future inputs. 
In contrast, the proposed predictor~\eqref{eq:causal_predictor}, which explicitly accounts for the residual disturbance, corresponds to the output response of the estimated system~\eqref{eq:estimate_system} and is therefore \emph{causal}, while also capturing uncertainty arising from unmeasured or unmodeled effects.

Regarding statistical properties,~\cite{Ljung1996} show that the subspace predictor is mean-consistent. 
~\cite{Fiedler2023} establish distributional consistency using empirical covariance estimates from multiple trajectories, which limits applicability to multi-trajectory settings rather than a single long-run experiment. 

\section{Tractable Uncertainty Propagation}\label{sec:UQ}
In this section, we present tractable uncertainty propagation via PCE, which represents $L^2$ random variables using orthogonal polynomials~\citep{Wiener1938,Sullivan2015}. Let  $ \xi \in \splx{n_{\xi}}$ denote a vector of independent random variables  generating  $\mcl{F} \doteq \sigma(\xi)$. 
Then, every $M \in \spl$  admits an $L^2$-convergent expansion
	\begin{equation*}\label{eq:PCE_def}
		M = \sum_{j=0}^{\infty}\pcecoe{m}{j} \phi^j( \xi)  \quad\text{with}\quad \pcecoe{m}{j} = \frac{\mean[M \phi^j(\xi)]}{\Lnorm{\phi^j(\xi)}^2 },
	\end{equation*}
	where $\pcecoe{m}{j} \in \R$ denotes the $j$-th PCE coefficient and ${\phi^j(\xi)}$ is an orthogonal basis.

For a  vector-valued $V\in \splx{\dimv}$, coefficients are collected as $\pce{v}^j \in \R^{\dimv}$.
Truncating the series to a finite number of terms may introduce  errors~\citep{Field2004,Muehlpfordt2018};  if exact,
$
V- \sum_{j=0}^{L-1} \pcecoe{v}{j} \phi^j(\xi)  =0. 
$
Then, $V\in \splx{\dimv}$ is said to \emph{admit an exact PCE with $L$ terms} in $\{\phi^j(\xi)\}_{j\in\N}$.
For compactness, we define 
\begin{align*}
\pcecoe{v}{[0,L-1]} &\doteq \left[\pcecoe{v}{0\top},\pcecoe{v}{1\top},\dots,\pcecoe{v}{L-1\top}\right]^\top \in \R^{\dimv L},\\
 \hankel_1\left(\pcecoe{v}{[0,L-1]}\right) &\doteq \left[\pcecoe{v}{0},\pcecoe{v}{1},\cdots, \pcecoe{v}{L-1}\right] \in \R^{\dimv \times L}.
 \end{align*}

The construction of exact finite-dimensional PCEs depends on the chosen basis. In particular, an exact PCE can be obtained from the first two moments.
For example, any  scalar  $M \in$  $\spl$ admits the affine expansion 
\[
M = \mean[M] + \sqrt{\mu_2(M)} \xi, \quad   \xi = (M - \mean[M])(\mu_2(M))^{-\frac{1}{2}},
\]
which is a  two-term PCE in the basis $\{1, \xi\}$. This extends  to vector-valued random variables.
\begin{lemma}[Affine PCEs~{\citep[Lemma 2]{Pan2023d}}] \label{lem:finite_PCE}~\\
	Let $V \in\splx{\dimv}$ have mean $\mean[V] \in \R^{\dimv}$ and covariance $\Sigma[V] \succeq 0$. For any  $\pce V  \in \R^{\dimv \times n_\xi}$ satisfying
	$
	\pce V \pce V^{\top} =\Sigma[V],
	$
	there exists  $\xi \in \splx{n_\xi}$  with   $\mean[\xi]=0\text{ and } \Sigma[\xi] = I_{n_\xi}$ such that
	\[
	V= \mean[V] + \pce V \xi= \textstyle \sum_{j=0}^{n_\xi} \pcecoe{v}{j} \phi^j(\xi),
	\] 	
where $\{\phi^j(\xi)\}_{j=0}^{n_\xi} \doteq	\left\{1,\{\xi_i\}_{i=1}^{n_\xi}\right\} $,  $\pcecoe{v}{0} \doteq \mean[V]$, and $ \hankel_1\left(\pcecoe{v}{[1,\dimv]}\right) \doteq \pce V $.
\end{lemma}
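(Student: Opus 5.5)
We prove Lemma~\ref{lem:finite_PCE} by writing down an explicit $\xi$ and checking the mean, covariance and almost-sure identity. The natural choice is to whiten the centered variable $V-\mean[V]$ using a generalized inverse of $\pce V$, and then add an independent correction in the directions that $\pce V$ does not see.

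\textbf{Full-column-rank case.} Suppose first that $\pce V$ has full column rank $n_\xi$. Set
\[
\xi \doteq \pce V^\dagger (V-\mean[V]).
\]
Then $\mean[\xi]=0$ and
\[
\Sigma[\xi]=\pce V^\dagger \Sigma[V] \pce V^{\dagger\top}=\pce V^\dagger \pce V \pce V^\top \pce V^{\dagger\top}=I_{n_\xi},
\]
because $\pce V^\dagger \pce V=I$ under full column rank. For the reconstruction we have
\[
\mean[V]+\pce V\xi=\mean[V]+P(V-\mean[V]),
\]
where $P=\pce V\pce V^\dagger$ is the orthogonal projector onto $\mathrm{range}(\pce V)=\mathrm{range}(\Sigma[V])$. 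It remains to show that $V-\mean[V]\in\mathrm{range}(\Sigma[V])$ almost surely. For any $w\in\ker\Sigma[V]$ we have $\mean[(w^\top(V-\mean[V]))^2]=w^\top\Sigma[V]w=0$, so $w^\top(V-\mean[V])=0$ a.s. Hence $(I-P)(V-\mean[V])=0$ a.s., which gives $V=\mean[V]+\pce V\xi$ in $L^2$.

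\textbf{Rank-deficient case (main obstacle).} If $\mathrm{rank}\,\pce V=r<n_\xi$, then $\pce V^\dagger(V-\mean[V])$ only has covariance $\pce V^\dagger\pce V$, which is a projector and not the identity. The fix is to set
\[
\xi=\pce V^\dagger(V-\mean[V])+(I-\pce V^\dagger\pce V)\eta,
\]
where $\eta$ is zero-mean with identity covariance and independent of $V$. This requires the probability space to carry such an $\eta$, which holds after enlarging $(\Omega,\mathcal F,\mu)$ if necessary.

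With this choice:
\begin{itemize}
\item The cross-covariance between the two terms vanishes by independence.
\item The covariance of $\xi$ is $\pce V^\dagger\pce V+(I-\pce V^\dagger\pce V)$, and this equals $I_{n_\xi}$.
\item The added term lies in $\ker\pce V$, so $\pce V\xi=P(V-\mean[V])$ exactly as in the full-column-rank case.
\end{itemize}

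\textbf{Basis representation.} Reading off $\pcecoe{v}{0}=\mean[V]$ and $\pcecoe{v}{j}$ as the $j$-th column of $\pce V$ gives the stated expansion in the basis $\{1,\xi_1,\dots,\xi_{n_\xi}\}$. This basis is orthogonal because $\mean[\xi_i]=0$ and $\mean[\xi_i\xi_j]=\delta_{ij}$.
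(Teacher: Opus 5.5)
Your proof is correct and more complete than what the paper offers. The paper does not prove this lemma; it imports it from earlier work. The only traces of an argument are the scalar normalization $\xi=(M-\mean[M])\mu_2(M)^{-1/2}$ and the later whitening $\xi^{\text{d}}(i)=\hat\Sigma_v^{-1/2}(\hat v^{\text{d}}(i)-\hat m_v)$. Both tacitly assume a nonsingular covariance and a square invertible factor.

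Your choice $\xi=\pce{V}^\dagger(V-\mean[V])$ reduces exactly to that whitening when $\pce{V}=\Sigma[V]^{1/2}$ is invertible, so in the regime the paper actually uses the two constructions coincide. What you add is (i) the almost-sure argument that $V-\mean[V]\in\mathrm{range}\,\Sigma[V]$, which is needed as soon as $\Sigma[V]$ is singular, and (ii) the padding term $(I-\pce{V}^\dagger\pce{V})\eta$ for rank-deficient $\pce{V}$.

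Your caveat about enlarging the probability space is a genuine correction to the statement as written. For a one-point $\Omega$, constant $V$ and $\pce{V}=0\in\R^{\dimv\times 1}$, the hypotheses hold, yet no $\xi$ with unit variance exists.

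You can sharpen the caveat slightly. Independence of $\eta$ from $V$ is more than your argument uses: the cross-covariance term only requires $\mean[(V-\mean[V])\eta^\top]=0$. Take the correction as $N\tilde\eta$, where the columns of $N$ form an orthonormal basis of $\ker\pce{V}$, so that $NN^\top=I-\pce{V}^\dagger\pce{V}$. Choose the components of $\tilde\eta$ zero-mean, orthonormal, and orthogonal in $L^2$ to the centered components of $V$. This shows the lemma holds on the original space exactly when $\dim L^2(\Omega,\mathcal F,\mu;\R)\ge n_\xi+1$. That condition is also necessary for any $\xi$ with $\Sigma[\xi]=I_{n_\xi}$ to exist at all, so enlargement is needed only in the degenerate cases where the statement would otherwise be false.
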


Applying Lemma~\ref{lem:finite_PCE} to the estimated
$\hat \uE \inst{k}$, we have
\begin{align*}
	\hat \uE \inst{k} &= \hat m_\ue + \hat{\Sigma}_\ue^{\frac{1}{2}} \xi \inst k, \quad
	\xi \inst k   \textstyle \sim  \frac{1}{T} \sum_{i=0}^{T-1} \delta_{\xi^{\text d} \inst{i}},
\end{align*}
with $\hat m_v$ the empirical mean, $\hat \Sigma_v$ the empirical covariance, and $\xi^{\text d}(i) = \hat \Sigma_v^{-\frac{1}{2}}\big(\hat v^{\text d}(i) - \hat m_v\big)$. Here, $\xi \inst k$ follows a normalized empirical distribution with zero mean and unit covariance.
Hence, $\hat \uE\inst{k}$ admits an exact PCE in the basis $\{1,  \xi_1\inst{k},\cdots,\xi_{n_y}\inst{k} \}$,
where $\xi_i\inst{k}$, $i \in \I_{[1,\dimy]}$ are elements of $\xi\inst{k}$. 
Assume $\Sinput_{[0,N-1]}$ and $Z_0$ admits exact PCEs in the basis $\{\bar \phi\}_{j=0}^{\bar L-1}$. Then, a joint basis for $\Sinput_{[0,N-1]}$,  $Z_0$, and $\hat{\uE}_{[0,N-1]}$ is given by
\begin{equation}\label{eq:union_basis}
	\{\phi^j\}_{j=0}^{L-1} =  \{\bar \phi^j\}_{j=0}^{\bar L-1} \bigcup  
	\left\{  \xi_1\inst{k},\cdots,\xi_{n_y}\inst{k} \right\}_{k=0}^{N-1}.
\end{equation}
with $L = \bar L + N \dimy$.

Although higher-degree bases can be constructed via Gram–Schmidt~\citep{Witteveen2006}, this first-order basis suffices to exactly represent all affine transformations of $\hat V_{[0,N-1]}$, $\Sinput_{[0,N-1]}$, and $Z_0$~\citep{Muehlpfordt2018}. Consequently, both predictors~\eqref{eq:causal_predictor} and~\eqref{eq:subspace} admit exact PCE representations in this basis.

Consider any scalar component $M \in \spl$ of the predicted output $\hat Y_{[0,N-1]}$. We quantify uncertainty via a confidence interval $I_\gamma \subseteq \R$ such that
\[
\mathbb{P}\!\left[M \in I_\gamma\right] \ge \gamma, \quad 0<\gamma<1.
\]

Using the mean $\mean[M]$ and variance $\mu_2(M)$, Chebyshev’s inequality~\citep{Tchebychef1867} yields
\begin{subequations}\label{eq:chebyshev}
\begin{align}
\mathbb{P}&\!\left[\big|M-\mean[M]\big| \le \alpha(\gamma)\sqrt{\mu_2(M)}\right] \ge \gamma, \\
\alpha(\gamma)&=
\begin{cases}
(1-\gamma)^{-\frac{1}{2}}, & M \in \spl,\\
F^{-1}_{\mcl N}\!\left(\tfrac{1+\gamma}{2}\right), & M \sim \mcl N(\mean[M],\mu_2(M)),
\end{cases}
\end{align}
\end{subequations}
where $F^{-1}_{\mcl N}$ denotes the inverse CDF of the standard normal distribution. The bound for general $L^2$ variables is conservative, while the Gaussian bound may underestimate tail probabilities. Higher-order moments yield a refinement~\citep{Buot2006}.

\begin{lemma}[Higher-order Chebyshev inequalities]\label{lem:Chebyshev_high_order}
Let $M \in \spl$ have finite $2n$-th moment $\mu_{2n}(M)$. Then, for any $0<\gamma<1$,
\begin{equation}
\mathbb{P}\!\left[\big|M-\mean[M]\big| \le \big((1-\gamma)^{-1}\mu_{2n}(M)\big)^{\frac{1}{2n}}\right] \ge \gamma.
\end{equation}
\end{lemma}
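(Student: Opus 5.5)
The plan is to apply Markov's inequality to the nonnegative random variable $|M-\mean[M]|^{2n}$ and then choose the threshold so that the resulting tail bound equals $1-\gamma$. This is the same mechanism that gives the classical bound \eqref{eq:chebyshev} when $n=1$. Only the finiteness of $\mu_{2n}(M)$ is used; no distributional structure beyond that is needed.

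\textbf{Step 1 (Markov bound).} Set $D \doteq |M-\mean[M]| \in \spl$. Then $D^{2n}\ge 0$ and $\mean[D^{2n}] = \mathbb{E}[(M-\mean[M])^{2n}] = \mu_{2n}(M)$, because $2n$ is even and so the absolute value can be dropped. For any $c>0$, the events $\{D>c\}$ and $\{D^{2n}>c^{2n}\}$ coincide, since $t\mapsto t^{2n}$ is strictly increasing on $[0,\infty)$. Markov's inequality therefore gives
\[
\mathbb{P}[D>c] = \mathbb{P}[D^{2n}>c^{2n}] \le \frac{\mu_{2n}(M)}{c^{2n}}.
\]

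\textbf{Step 2 (choice of threshold).} Choose $c = \big((1-\gamma)^{-1}\mu_{2n}(M)\big)^{\frac{1}{2n}}$, which gives $\mu_{2n}(M)/c^{2n} = 1-\gamma$. Passing to the complement, $\mathbb{P}[D\le c] \ge 1-(1-\gamma) = \gamma$, which is the claim.

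\textbf{Step 3 (degenerate case).} The only point needing care is $\mu_{2n}(M)=0$, where $c=0$ and Step 1 cannot be applied directly. In that case $D^{2n}=0$ almost surely, so $M=\mean[M]$ almost surely and $\mathbb{P}[D\le 0]=1\ge\gamma$. Alternatively, apply Step 1 with an arbitrary $c>0$ and let $c\downarrow 0$, using continuity of measure. I do not expect any real obstacle here; the lemma is a direct instance of Markov's inequality.
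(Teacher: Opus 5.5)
Your proof is correct. The paper gives no proof of its own and only cites a textbook reference for this refinement of \eqref{eq:chebyshev}; your argument (Markov's inequality applied to $|M-\mean[M]|^{2n}$ with $c^{2n}=\mu_{2n}(M)/(1-\gamma)$, plus the separate treatment of $\mu_{2n}(M)=0$) is the standard one that the citation relies on.
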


The corresponding confidence interval is
\[
I_{2n,\gamma} = \big[\mean[M]- r_{2n,\gamma},\, \mean[M] + r_{2n,\gamma}\big], 
r_{2n,\gamma} = \left(\frac{\mu_{2n}(M)}{1-\gamma}\right)^{\frac{1}{2n}},
\]
which is tighter than the Chebyshev interval $I_{2,\gamma}$ if
\[
\mu_{2n}(M) \le (1-\gamma)^{1-n}\big(\mu_2(M)\big)^n.
\]
For $n=2$ and $\gamma=0.9$, this requires the kurtosis $\kappa = \mu_4/\mu_2^2 \le 10$. Since a normal distribution has $\kappa=3$, this condition excludes extreme heavy tails while allowing moderate non-Gaussianity.

If $M$ admits an exact PCE, higher-order moments can be computed efficiently.  For $M = \sum_{j_1=1}^{L-1} \pcecoe{m}{j}\phi^{j}( \xi)$, 
\begin{multline}
	\mu_{2n} (M)= \textstyle
	 \sum_{j_1=1}^{L-1} \sum_{j_2=1}^{L-1}\cdots \sum_{j_{2n}=1}^{L-1} \pcecoe{m}{j_1} \pcecoe{m}{j_2} \cdots \pcecoe{m}{j_{2n}} \\ \cdot \mean\left[\phi^{j_1}( \xi)\phi^{j_2}( \xi)\cdots\phi^{j_{2n}}( \xi)\right], \nonumber  
	  \end{multline}
      where expectations of basis products can be precomputed offline~\citep{Lefebvre2020}. If the basis functions are independent distributed as in~\eqref{eq:union_basis}, then
\[
\mathbb{E} \left[(\phi^i( \xi))^{l_1} (\phi^i( \xi))^{l_2}\right] = \mathbb{E} \left[(\phi^i( \xi))^{l_1}\right] \mathbb{E} \left[(\phi^i( \xi))^{l_2}\right].
\]
and $\mathbb{E}[\phi^i(\xi)] = 0$ for $i \ge 1$. 
 Hence, the fourth-order moment simplifies to
\begin{multline}\label{eq:fourth_moment_PCE}
		\mu_4(M)  
		= \textstyle \sum_{j=1}^{L-1}  (\pcecoe{m}{j})^4 \mu_4(\phi^{j}( \xi)) \\
		 \textstyle+  \sum_{i=1}^{L-1} \sum_{j=i+1}^{L-1} 6 (\pcecoe{m}{i})^2 (\pcecoe{m}{j})^2 \mu_2(\phi^{j}( \xi)) \mu_2(\phi^{i}( \xi)).
	\end{multline}

\section{A Smart-Home Case Study}\label{sec:case}We evaluate the proposed stochastic prediction method on a real-world smart-home dataset from a residential building in Norway operated by NTNU~\citep{Reinhardt2024}. The dataset and preprocessing code are provided in~\citep{Reinhardt2024,Kaupmann2020}.

The building comprises four thermal zones with distinct heat capacities and losses, each equipped with a temperature sensor and a heat-pump-based heating device. The control inputs include the ON/OFF status $O_i \in \{0,1\}$, target temperature $T_i^{\text{set}} \in \I_{[16,31]}~^\circ$C, and fan speed $F_i \in \I_{[1,7]}$, forming
\begin{equation}\label{eq:input}
u \doteq \col(O,T^{\text{set}},F)_{[1,4]} \in \R^{12}.
\end{equation}
The outputs are the zone temperatures
\begin{equation}\label{eq:output}
y \doteq [T_1,\ldots,T_4]^\top \in \R^4.
\end{equation}
Measured disturbances include solar azimuth $\theta_a$, solar zenith $\theta_z$, outdoor temperature $T^{\text{out}}$, wind speed $v_w$, cloud opacity $\tau_c$, and global irradiance $I_G$:
\begin{equation}\label{eq:measured_disturbance}
w^{\text{m}} \doteq [\theta_a, \theta_z, T^{\text{out}}, v_w, \tau_c, I_G]^\top,
\end{equation}
while $w^{\text{u}}$ denotes unmeasured disturbances. Figure~\ref{fig:building_sketch} illustrates the relation between $u$, $y$, $w^{\text{m}}$, and $w^{\text{u}}$.

The dataset spans May 11 to December 29, 2021, with a 15-minute sampling interval. 
At each prediction time $t$, the past 30 days ($T=2880$) are used to estimate the residual disturbance $\hat{\ue}^{\text{d}}$ and predict $\hat{\uE}$ over a 24-hour horizon ($N=96$). 
Predictions are generated every 6 hours from Nov.~8 to Dec.~27, 2021, resulting in $N_s=200$ scenarios. 
To isolate the effect of unstructured disturbances, future inputs and measured disturbances are treated as deterministic. 
Accordingly, $\bar\phi = \{1\}$ in~\eqref{eq:union_basis}, and the fourth-order moments of $\hat{Y}(k\,|\,t)$ follow~\eqref{eq:fourth_moment_PCE}.

\begin{figure}[tb]
	\centering
	\includegraphics[width=.5\linewidth]{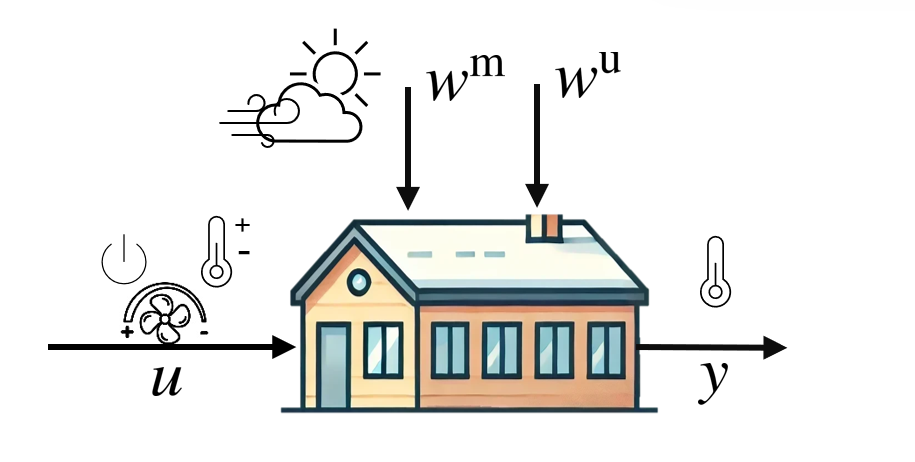}
	\caption{Conceptual block diagram of the residential building with input $u$~\eqref{eq:input}, output $y$~\eqref{eq:output}, measured disturbance $w^{\text{m}}$~\eqref{eq:measured_disturbance}, and unmeasured disturbance $w^{\text{u}}$.}\label{fig:building_sketch}
\end{figure}

\begin{table}[t!]
\setlength{\tabcolsep}{1pt} 
	\caption{Comparison of the prediction results.}
	\label{tab:NTNU_building_lambda}
	\centering
		\begin{adjustbox}{width=\columnwidth}
		\begin{tabular}{@{}cccccccc@{}}
			\toprule
			\multirow{2}{*}{Approach}&  	\multirow{2}{*}{$\overline{	\text{RMSE}}$ }   &  \multicolumn{2}{c}{Cheb. $\mu_2$} &  \multicolumn{2}{c}{Cheb. $\mu_4$} &\multicolumn{2}{c}{Gauss.}\\ 
			\cmidrule(lr){3-4}   \cmidrule(lr){5-6}   \cmidrule(lr){7-8} 
			& 	& $\bar c$ $[-]$ & $\bar{r}$ $\SI{}{[^\circ C]}$	& $\bar c$ $[-]$ & $\bar{r}$ $\SI{}{[^\circ C]}$	& $\bar c$ $[-]$ & $\bar{r}$ $\SI{}{[^\circ C]}$\\
			\midrule
			\eqref{eq:subspace}, $w^\text{s} = \emptyset$ & 1.068 &	0 &	0	  &0 &	0 &0 &0	 \\
			\eqref{eq:causal_predictor}, $w^\text{s} = \emptyset$ &  0.859 &	97.29$\%$ &	2.36 &	94.34$\%$ &	1.77 &	85.96$\%$ &	1.23\\ 
			\eqref{eq:causal_predictor}, $w^\text{s}=[\theta_a,\theta_z]^\top$  & 	0.802 &	96.90$\%$&	2.07 &	93.40$\%$ &	1.56 &	84.02$\%$ &	1.08\\
			\eqref{eq:causal_predictor}, $w^\text{s}=w^{\text m}$ & 	0.785 &	95.97$\%$ &	1.78 &	91.11$\%$ &	1.35&	78.62$\%$ &	0.93\\   
			\bottomrule
		\end{tabular}
		\end{adjustbox}
	\end{table}
	Let $\pcecoe{\hat y}{j}(k\,|\,t)$ denote the PCE coefficients of the predicted output, where $\pcecoe{\hat y}{0}(k\,|\,t)$ is the predicted mean. 
The mean prediction accuracy is evaluated by the average root mean square error (RMSE)
\begin{equation}
\overline{\text{RMSE}} \doteq \sqrt{\frac{1}{NN_s} \sum_{t=0}^{N_s-1} \sum_{k=0}^{N-1} 
\left\| \pcecoe{\hat y}{0}(k\,|\,t) - y(t+k) \right\|^2 }.
\end{equation}
We access stochastic accuracy via the average $90\%$ coverage 
\begin{equation*} \bar c \doteq \sum_{t=0}^{N_s-1}\sum_{i =1}^{\dimy} \sum_{k = 0}^{N-1} \frac{ I\left(\left|y_i\inst{t + k} - \pcecoe{\hat y}{0}_i\inst{k \,|\,t } \right| \leq r_{i}\left(k\,|\,t\right)\right) }{NN_s \dimy}, \end{equation*}
where $I(\cdot)$ is the indicator function. The radius $r_i(k\,|\,t)$ is chosen from
\begin{equation}\label{eq:radius_calculation}
r_i(k\,|\,t) \approx 
\begin{cases}
3.16 \sqrt{\mu_2}, & \text{(Chebyshev $\mu_2$)},\\
1.78 \sqrt[4]{\mu_4}, & \text{(Chebyshev $\mu_4$)},\\
1.64 \sqrt{\mu_2}, & \text{(Gaussian)}.
\end{cases}
\end{equation}
The average interval radius $\bar r$ is defined analogously. Coverage and interval width jointly reflect the trade-off between reliability (coverage) and conservatism (interval size).

Table~\ref{tab:NTNU_building_lambda} summarizes the prediction results. 
Among the variants in~\eqref{eq:radius_calculation}, the fourth-order Cheb.\ approach yields tighter confidence bounds than standard Cheb.\ and higher coverage than the Gauss.\ assumption, providing a good balance between coverage and interval width. 
The subspace predictor~\eqref{eq:subspace}, being deterministic, fails to capture uncertainty from unmeasured disturbances.

We further assess the impact of structured disturbances: 
(i) none; 
(ii) solar angles $w^\text{s} = [\theta_a,\theta_z]^\top$; 
(iii) all measured disturbances $w^\text{s} = w^\text{m}$. 
Including solar angles improves $\overline{\text{RMSE}}$ by about $0.05^\circ$C and reduces the average interval radius by $0.2$--$0.3^\circ$C. 
These angles are deterministically computable from time and location. 
In contrast, adding weather variables ($T^{\text{out}}, v_w, \tau_c, I_G$) yields only marginal improvement ($\approx 0.02^\circ$C in $\overline{\text{RMSE}}$), even with perfect forecasts.

Figure~\ref{fig:Nov_27_response} shows true and predicted $T_1$ for a representative room--day instance. 
For this case, the RMSE of \eqref{eq:subspace} is $0.47^\circ$C (no structured disturbance), while \eqref{eq:causal_predictor} achieves $0.42^\circ$C, $0.30^\circ$C, and $0.34^\circ$C for the three cases above. 
Overall, incorporating residual disturbances reduces prediction error and enables stochastic predictions with confidence intervals. 
Including solar angles significantly improves accuracy and tightens intervals, whereas additional weather variables provide limited benefit. 
Although trajectories vary across the dataset, the overall performance trends remain consistent.

\begin{figure}[t]
	\centering
    \includegraphics[width=.9\columnwidth]{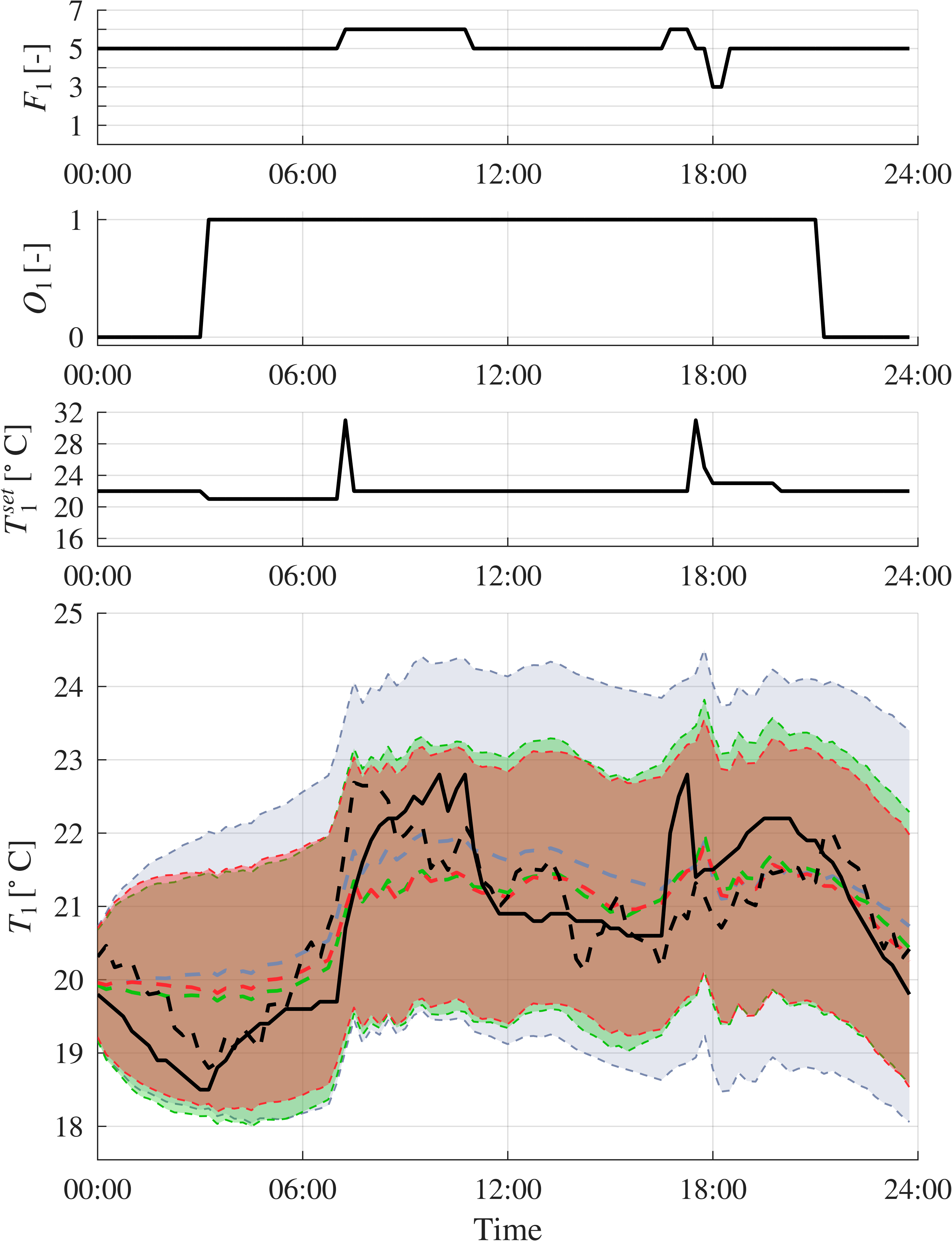} 
\caption{Comparison of true and predicted thermal responses on November 17, 2021. Black solid lines show measured inputs and outputs (ground truth). Blue, green, and red dashed lines denote predicted mean trajectories with $90\%$ confidence intervals (based on fourth-order moments) for the cases $w^\text{s} = \emptyset$, $w^\text{s} = [\theta_a, \theta_z]^\top$, and $w^\text{s} = w^\text{m}$, respectively. The black dashed line shows the subspace predictor~\eqref{eq:subspace} without confidence intervals for $w^\text{s} = \emptyset$.} \label{fig:Nov_27_response} 
\end{figure}

\section{Conclusion}\label{sec:conclusions}
This paper introduced a data-driven framework for stochastic prediction that  addresses unmeasured and unmodeled disturbances through a residual disturbance process. By combining behavioral system theory with PCE, the proposed approach enables tractable uncertainty propagation directly from input–output data.
Theoretical results established the consistency of residual disturbance estimation and the distributional convergence of the proposed predictor. Application to smart-home data demonstrates the effectiveness of the proposed method.
 Among the uncertainty-bounding schemes, the fourth-order Chebyshev approach achieved the best balance between coverage probability and tightness of confidence intervals.
Future work includes relaxing the i.i.d. assumption on residuals and extending the framework to nonlinear systems via kernel embeddings. Moreover, also application to chemical process systems seems promising.

\begin{ack}
	The authors gratefully acknowledge valuable discussions with Alexander Engelmann and  Michael Kaupmann.
\end{ack}

\section*{DECLARATION OF GENERATIVE AI IN THE WRITING PROCESS}
During the preparation of this work the authors used ChatGPT to assist with language editing and text refinement. After using this tool/service, the authors reviewed and edited the content as needed and take full responsibility for the content of the publication.

\begingroup
\setlength{\bibsep}{2pt}        
\renewcommand*{\bibfont}{\small}

\bibliography{ALL_GP,Publications}

\begin{thebibliography}{28}
\providecommand{\natexlab}[1]{#1}
\providecommand{\url}[1]{\texttt{#1}}
\providecommand{\urlprefix}{URL }
\expandafter\ifx\csname urlstyle\endcsname\relax
  \providecommand{\doi}[1]{doi:\discretionary{}{}{}#1}\else
  \providecommand{\doi}{doi:\discretionary{}{}{}\begingroup
  \urlstyle{rm}\Url}\fi

\bibitem[{Breschi et~al.(2023)Breschi, Chiuso, and Formentin}]{Breschi2023}
Breschi, V., Chiuso, A., and Formentin, S. (2023).
\newblock Data-driven predictive control in a stochastic setting: A unified
  framework.
\newblock \emph{Automatica}, 152, 110961.

\bibitem[{Buot(2006)}]{Buot2006}
Buot, M. (2006).
\newblock Probability and computing: randomized algorithms and probabilistic
  analysis.

\bibitem[{Chebyshev(1867)}]{Tchebychef1867}
Chebyshev, P.L. (1867).
\newblock Des valeurs moyennes.
\newblock \emph{Journal de Math\'ematiques Pures et Appliqu\'ees}, 2(12),
  177--184.

\bibitem[{Coulson et~al.(2019)Coulson, Lygeros, and D{\"o}rfler}]{Coulson2019}
Coulson, J., Lygeros, J., and D{\"o}rfler, F. (2019).
\newblock Data-enabled predictive control: In the shallows of the {DeePC}.
\newblock In \emph{2019 18th European Control Conference (ECC)}, 307--312.
  IEEE.

\bibitem[{Drgoňa et~al.(2020)Drgoňa, Arroyo, Cupeiro~Figueroa, Blum, Arendt,
  Kim, Ollé, Oravec, Wetter, Vrabie, and Helsen}]{Drgona2020}
Drgoňa, J., Arroyo, J., Cupeiro~Figueroa, I., Blum, D., Arendt, K., Kim, D.,
  Ollé, E.P., Oravec, J., Wetter, M., Vrabie, D.L., and Helsen, L. (2020).
\newblock All you need to know about model predictive control for buildings.
\newblock \emph{Annu. Rev. Control}, 50, 190--232.

\bibitem[{Favoreel et~al.(1999)Favoreel, De~Moor, and Gevers}]{Favoreel1999}
Favoreel, W., De~Moor, B., and Gevers, M. (1999).
\newblock {SPC: Subspace predictive control}.
\newblock \emph{IFAC Proc. Vol.}, 32(2), 4004--4009.

\bibitem[{Fiedler and Lucia(2021)}]{Fiedler2021}
Fiedler, F. and Lucia, S. (2021).
\newblock On the relationship between data-enabled predictive control and
  subspace predictive control.
\newblock In \emph{2021 European Control Conference (ECC)}, 222--229.

\bibitem[{Fiedler and Lucia(2023)}]{Fiedler2023}
Fiedler, F. and Lucia, S. (2023).
\newblock Probabilistic multi-step identification with implicit state
  estimation for stochastic {MPC}.
\newblock \emph{IEEE Access}, 11, 117018--117029.

\bibitem[{Field and Grigoriu(2004)}]{Field2004}
Field, R.V. and Grigoriu, M. (2004).
\newblock On the accuracy of the polynomial chaos approximation.
\newblock \emph{Probabilist. Eng. Mech.}, 19(1-2), 65--80.

\bibitem[{Huang et~al.(2023)Huang, Zhen, Lygeros, and D{\"o}rfler}]{Huang2023}
Huang, L., Zhen, J., Lygeros, J., and D{\"o}rfler, F. (2023).
\newblock Robust data-enabled predictive control: Tractable formulations and
  performance guarantees.
\newblock \emph{IEEE Trans. Autom. Control}, 68(5), 3163--3170.

\bibitem[{Kaupmann(2023)}]{Kaupmann2020}
Kaupmann, M. (2023).
\newblock \emph{Data-driven modeling for building control}.
\newblock Master's thesis, Institute of Energy Systems, Energy Efficiency and
  Energy Economics, TU Dortmund University.

\bibitem[{Lefebvre(2020)}]{Lefebvre2020}
Lefebvre, T. (2020).
\newblock On moment estimation from polynomial chaos expansion models.
\newblock \emph{IEEE Control Syst. Lett.}, 5(5), 1519--1524.

\bibitem[{Ljung(1999)}]{Ljung1999}
Ljung, L. (1999).
\newblock \emph{{System Identification: Theory for the User}}.
\newblock Prentice Hall.

\bibitem[{Ljung and McKelvey(1996)}]{Ljung1996}
Ljung, L. and McKelvey, T. (1996).
\newblock Subspace identification from closed loop data.
\newblock \emph{Signal Process.}, 52(2), 209--215.

\bibitem[{M{\"u}hlpfordt et~al.(2017)M{\"u}hlpfordt, Findeisen, Hagenmeyer, and
  Faulwasser}]{Muehlpfordt2018}
M{\"u}hlpfordt, T., Findeisen, R., Hagenmeyer, V., and Faulwasser, T. (2017).
\newblock Comments on quantifying truncation errors for polynomial chaos
  expansions.
\newblock \emph{IEEE Control Syst. Lett.}, 2(1), 169--174.

\bibitem[{Norton(1988)}]{Norton1988}
Norton, J.P. (1988).
\newblock \emph{An introduction to identification}.
\newblock Acad. Press, London [u.a.].

\bibitem[{Pan(2025)}]{Pan2025D}
Pan, G. (2025).
\newblock \emph{Data-Driven Control of Stochastic Systems: Representation,
  Prediction, and Optimal Control}.
\newblock Ph.D. thesis, Hamburg University of Technology.

\bibitem[{Pan et~al.(2023)Pan, Ou, and Faulwasser}]{Pan2022a}
Pan, G., Ou, R., and Faulwasser, T. (2023).
\newblock On a stochastic fundamental lemma and its use for data-driven optimal
  control.
\newblock \emph{IEEE Trans. Autom. Control}, 68(10), 5922--5937.

\bibitem[{Pan et~al.(2025)Pan, Ou, and Faulwasser}]{Pan2023d}
Pan, G., Ou, R., and Faulwasser, T. (2025).
\newblock On data-driven stochastic output-feedback predictive control.
\newblock \emph{IEEE Trans. Autom. Control}, 70(5), 2948--2962.

\bibitem[{Reinhardt et~al.(2025)Reinhardt, Cai, and Gros}]{Reinhardt2024}
Reinhardt, D., Cai, W., and Gros, S. (2025).
\newblock \emph{Data-driven domestic flexible demand: observations from
  experiments in cold climate}, 691--728.
\newblock Springer Nature Switzerland.

\bibitem[{Sullivan(2015)}]{Sullivan2015}
Sullivan, T.J. (2015).
\newblock \emph{{Introduction to Uncertainty Quantification}}, volume~63.
\newblock Springer.

\bibitem[{{van Overschee} and De~Moor(2012)}]{VanOverschee2012}
{van Overschee}, P. and De~Moor, B. (2012).
\newblock \emph{{Subspace identification for linear systems:
  Theory-Implementation-Applications}}.
\newblock Springer Science \& Business Media.

\bibitem[{Wang et~al.(2025)Wang, You, Huang, and Shang}]{Wang2022}
Wang, Y., You, K., Huang, D., and Shang, C. (2025).
\newblock Data-driven output prediction and control of stochastic systems: An
  innovation-based approach.
\newblock \emph{Automatica}, 171, 111897.

\bibitem[{Wiener(1938)}]{Wiener1938}
Wiener, N. (1938).
\newblock The homogeneous chaos.
\newblock \emph{Am. J. Math.}, 60(4), 897--936.

\bibitem[{Willems et~al.(2005)Willems, Rapisarda, Markovsky, and
  De~Moor}]{Willems2005}
Willems, J.C., Rapisarda, P., Markovsky, I., and De~Moor, B.L.M. (2005).
\newblock A note on persistency of excitation.
\newblock \emph{Syst Control Lett}, 54(4), 325--329.

\bibitem[{Witteveen and Bijl(2006)}]{Witteveen2006}
Witteveen, J.A.S. and Bijl, H. (2006).
\newblock Modeling arbitrary uncertainties using {Gram-Schmidt} polynomial
  chaos.
\newblock In N.J. Pfeiffer (ed.), \emph{44th AIAA Aerospace Sciences Meeting
  and Exhibit}, 896. American Institute of Aeronautics and Astronautics Inc.
  (AIAA), United States.

\bibitem[{Yin et~al.(2024)Yin, Iannelli, and Smith}]{Yin2024}
Yin, M., Iannelli, A., and Smith, R.S. (2024).
\newblock Stochastic data-driven predictive control: Regularization,
  estimation, and constraint tightening.
\newblock \emph{IFAC-Pap.}, 58(15), 79--84.

\bibitem[{Özmeteler et~al.(2024)Özmeteler, Bilgic, Pan, Koch, and
  Faulwasser}]{Oezmeteler2024}
Özmeteler, M.B., Bilgic, D., Pan, G., Koch, A., and Faulwasser, T. (2024).
\newblock Data-driven uncertainty propagation for stochastic predictive control
  of multi-energy systems.
\newblock \emph{Eur. J. Control}, 80, 101066.

\end{thebibliography}
\endgroup
\appendix
                                      
\end{document}